\newtheorem*{theorem*}{Theorem}
\begin{document}

\title{Reverse engineering of one-qubit filter functions with dynamical invariants}
\author{R.~K.~L.~Colmenar}
\email{ralphkc1@umbc.edu}
\affiliation{Department of Physics, University of Maryland Baltimore County, Baltimore, MD 21250, USA}
\altaffiliation{Current address: Laboratory for Physical Sciences, University of Maryland, College Park, Maryland 20740, USA; ralphkc1@umd.edu}
\author{J.~P.~Kestner}
\affiliation{Department of Physics, University of Maryland Baltimore County, Baltimore, MD 21250, USA}

\begin{abstract}
We derive an integral expression for the filter-transfer function of an arbitrary one-qubit gate through the use of dynamical invariant theory and Hamiltonian reverse engineering. We use this result to define a cost function which can be efficiently optimized to produce one-qubit control pulses that are robust against specified frequency bands of the noise power spectral density. We demonstrate the utility of our result by generating optimal control pulses that are designed to suppress broadband detuning and pulse amplitude noise. We report an order of magnitude improvement in gate fidelity in comparison with known composite pulse sequences. More broadly, we also use the same theoretical framework to prove the robustness of nonadiabatic geometric quantum gates under specific error models and control constraints.
\end{abstract}

\maketitle

\section{Introduction}
Accurate manipulation of noisy quantum systems is an important problem in optimal control theory with potential applications in the field of chemical reaction control~\cite{Gordon_1997,Assion_1998,Rice_2002}, quantum sensing~\cite{Poggiali_2018,Muller_2018}, and quantum information processing (QIP)~\cite{Nielsen_Chuang} to name a few. In QIP, a typical strategy for suppressing errors due to noise is to use dynamical decoupling~\cite{Viola_1999,Khodjasteh_2005,Biercuk_2009,deLange_2010,Naydenov_2011,Bylander_2011} and composite pulse sequences~\cite{Levitt_1986,Brown_2004,Vandersypen_2005,Wang_2012,Kestner_2013,CalderonVargas_2017}. These techniques are designed to perturbatively suppress noise with correlation time scales that are much longer than the target evolution time (quasistatic noise). In many instances, however, quantum devices also suffer from non-static noise that fluctuates on the order of the evolution time or faster~\cite{Simmonds_2004,Bylander_2011,Dial_2013,Yoneda_2018}. Composite pulses have limited efficacy in such cases~\cite{Kabytayev_2014} and can even be detrimental to the quality of the generated quantum gate~\cite{Gungordu_2018}. 

An alternative solution to these control problems is to use pulse shaping techniques~\cite{Khaneja_2005,Daems_2013,Guo_2018,Gungordu_2019,Nobauer_2015,Barnes_2015,Motzoi_2009,Chen_2011}. The main idea of this approach is to find, either analytically or numerically, an appropriate set of time-dependent control Hamiltonian parameters that produces a desired evolution. Since the time-dependent Schr\"{o}dinger equation (TDSE) is generally not analytically tractable, analytical solutions are typically limited to simple pulse shapes \cite{Torosov_2011} or in restricted settings (e.g., for static error~\cite{Barnes_2015,Gungordu_2019} or state transfer protocols~\cite{Daems_2013}). Numerical solutions offer much more flexibility in the control landscape. When combined with the formalism of filter functions~\cite{Green_2013}, which characterizes the sensitivity of a control protocol to the power spectral density of the noise, it is possible to generate quantum gates that are robust against a specified spectral region of noise. Specifically, robust quantum gates are obtained by minimizing the overlap between the control's filter function and the noise power spectral density (PSD) in frequency space. This can be used, along with any control field constraints, to define a cost function to be minimized using, for example, gradient-based methods. Optimization algorithms that are designed for deep learning and are implemented in platforms such as TensorFlow~\cite{Tensorflow_2015} or Julia's Flux package~\cite{Flux_2018} are especially well-suited for these tasks owing to their built-in automatic differentiation capability. The power and flexibility offered by deep neural networks for solving quantum control problems has been demonstrated in a variety of recent works~\cite{Ball_2021,Bentley_2020,Baum_2021,Carvalho_2021,Gungordu_2020,Kanaar_2021}. However, filter function engineering typically involves solving the TDSE for the time evolution operator. It is possible to circumvent this, for example, using Hamiltonian reverse engineering based on the theory of dynamical invariants~\cite{Chen_2011}. Thus, it is possible to further reduce the computational workload of the optimization framework by reparameterizing the cost function in terms of dynamical invariant parameters.

In this work, we use dynamical invariant theory and Hamiltonian reverse engineering to derive an integral expression for the filter function of an arbitrary one-qubit gate and explore its theoretical and practical applications. Our work is structured as follows. We begin Sec.~\ref{sec:dynamical-invariants} by reviewing the theory of dynamical invariants. We follow this up with a derivation of the one-qubit filter function for an arbitrary noise model in terms of the dynamical invariant parameters. We explore the practical applications of our results in Sec.~\ref{sec:results} by numerically searching for optimal control solutions using deep neural networks. Specifically, we consider noise models with a $1/f$ noise spectrum~\cite{Dutta_1981} which is prevalent in solid-state qubits~\cite{Dial_2013,Yoneda_2018,Chan_2018,VanHarlingen_2004,Yoshihara_2006,Kim_2015}. In addition, we discuss in Sec.~\ref{sec:robustness-of-gp} some theoretical implications of our result by proving the robustness of geometric quantum gates against certain noise models under a strict only two-axis driving constraint. We then conclude and summarize our findings in Sec.~\ref{sec:conclusions}.

\section{Dynamical invariants}
\label{sec:dynamical-invariants}
We consider as our starting point a general one-qubit control Hamiltonian with three-axis driving,
\begin{equation}
\label{eq:hamiltonian}
H_c(t) = \frac{1}{2}
	\begin{bmatrix}
		\Delta(t) & \Omega(t) \mathrm{e}^{-i \varphi(t)} \\
		\Omega(t) \mathrm{e}^{i \varphi(t)} & -\Delta(t)
	\end{bmatrix}.
\end{equation}
This particular form is relevant in systems such as superconducting qubits~\cite{Koch_2007}, quantum dot spin qubits~\cite{Laucht_2017}, and NMR qubits~\cite{Gershenfeld_1997} to name a few, corresponding to the rotating wave approximation for a two-level system that is driven by an oscillating field with amplitude $\Omega$ at a carrier frequency detuned from resonance by $\Delta$, and with phase $\varphi$. Here three-axis driving means that all three control fields can be variably tuned to produce arbitrary Bloch sphere rotations. The solution to the time-dependent Schr\"{o}dinger equation with this Hamiltonian is not analytically tractable in general. It is possible, however, to use the theory of dynamical invariants to reformulate this problem so as to specify a resulting unitary evolution and then analytically calculate a time-dependent Hamiltonian that would produce it~\cite{Chen_2011}. A dynamical invariant $I(t)$ is a solution to the Liouville-von Neumann equation \cite{Lewis_1967}
\begin{equation}
\label{eq:liouville}
i \frac{\partial I(t)}{\partial t} - \left[H_c(t),I(t)\right] = 0.
\end{equation}
The eigenvectors $\ket{\phi_n(t)}$ of $I(t)$ are related to the solutions of the Schr\"{o}dinger equation by a global phase factor: $\ket{\psi_n(t)} = \mathrm{e}^{i\alpha_n(t)}\ket{\phi_n(t)}$, where $\alpha_n(t)$ are the Lewis-Riesenfeld phases given by~\cite{Lewis_1969}
\begin{equation}
\label{eq:lewis-riesenfeld-phase}
\alpha_n(t) = \int_{0}^{t} \ev**{i\frac{\partial}{\partial s}-H_c(s)}{\phi_{n}(s)}\,\mathrm{d}s.
\end{equation}
Within this framework, the time evolution operator $U_{c}(t)$ can be expressed as
\begin{equation}
\label{eq:time-evolution}
U_c(t) = \sum_{n = \pm} \mathrm{e}^{i \alpha_n(t)} \ket{\phi_n(t)}\bra{\phi_n(0)}.
\end{equation}
Thus, the theory of dynamical invariants effectively transforms the problem of solving the time-dependent Schr\"{o}dinger equation to finding an appropriate $I(t)$ that satisfies Eq.~\eqref{eq:liouville}. As a consequence, we are free to choose a parametrization for $U_{c}(t)$ by choosing the $\ket{\phi_n(t)}$ appropriately. Suppose that we choose
\begin{gather}
\ket{\phi_{+}(t)} = \cos\left[\frac{\gamma(t)}{2}\right] \mathrm{e}^{-i\beta(t)}\ket{0} + \sin\left[\frac{\gamma(t)}{2}\right]\ket{1},
\label{eq:eigenvectors-1}\\
\ket{\phi_{-}(t)} = \sin\left[\frac{\gamma(t)}{2}\right]\ket{0} - \cos\left[\frac{\gamma(t)}{2}\right]\mathrm{e}^{i\beta(t)}\ket{1},
\label{eq:eigenvectors-2}
\end{gather}
where $\gamma$ and $\beta$ are the dynamical invariant parameters, $I(t)\ket{\phi_n(t)} = \pm\Omega_0/2\ket{\phi_n(t)}$, and $\Omega_0$ is an arbitrary constant with units of frequency. This allows us to express $I(t)$ in a form similar to Eq.~\eqref{eq:hamiltonian}
\begin{equation}
\label{eq:dynamical-invariant}
I(t) = \frac{\Omega_0}{2}
	\begin{bmatrix}
		\cos(\gamma) & \sin(\gamma)\mathrm{e}^{-i \beta} \\
		\sin(\gamma)\mathrm{e}^{i \beta} & -\cos(\gamma)
	\end{bmatrix}.
\end{equation}
If we require Equations \eqref{eq:hamiltonian} and \eqref{eq:dynamical-invariant} to satisfy Eq.~\eqref{eq:liouville}, we are left with two coupled auxiliary equations \cite{Chen_2011}
\begin{gather}
\dot{\gamma} = -\Omega \sin(\beta-\varphi)
\label{eq:auxiliary-equations-1}\\
\Delta -\dot{\beta} = \Omega \cot(\gamma)\cos (\beta - \varphi),
\label{eq:auxiliary-equations-2}
\end{gather}
which, along with the appropriate boundary conditions, can be used to determine the control parameters $\Omega(t)$, $\Delta(t)$, and $\varphi(t)$ that targets a desired $U_c(t)$. This choice of parameterization allows us to write $U_{c}(t)$ strictly in terms of the dynamical invariant parameters and the Lewis-Riesenfeld phase:
\begin{equation}
\begin{aligned}
\label{eq:time-evolution-rewrite}
U_c(t) &= \mathrm{e}^{-i\frac{\beta(t)}{2}\sigma_{Z}} \mathrm{e}^{-i\frac{\gamma(t)}{2}\sigma_{Y}} \mathrm{e}^{i\frac{\zeta(t) - \zeta(0)}{2}\sigma_{Z}}\\
&\qquad \times \mathrm{e}^{i\frac{\gamma(0)}{2}\sigma_{Y}} \mathrm{e}^{i\frac{\beta(0)}{2}\sigma_{Z}},
\end{aligned}
\end{equation}
where $\alpha = \alpha_+ = -\alpha_-$ and we introduce a new dynamical invariant parameter
\begin{align}
    \zeta(t) &= 2\alpha(t) -\beta(t) \nonumber\\
    \label{eq:zeta}
    &= -\beta(0) + \int_{0}^{t} \frac{\dot{\gamma}\cot(\beta-\varphi)}{\sin\gamma}\,\mathrm{d}t'
\end{align}

The auxiliary equations provide a family of control solutions that allow us to reverse engineer a desired quantum gate. Since the gate only depends on the boundary values of the dynamical invariant parameters, there are infinitely many ways to generate the gate. It is desirable to use this freedom in the control Hamiltonian such that the resulting evolution is also robust against noise. To this end, filter functions provide a convenient method of quantifying the gate fidelity's susceptibility to noise with respect to its spectral properties~\cite{Green_2013}. 
The total one-qubit Hamiltonian in the presence of noise can be written as 
\begin{equation}
H(t) = H_c(t) + H_e(t),
\end{equation} 
where $H_c(t)$ is the ideal deterministic control Hamiltonian and $H_e(t)$ is the stochastic error Hamiltonian. More explicitly, $H_e(t)$ can generally be expressed as
\begin{equation}
\label{eq:error-H}
H_e(t) = \sum_{i=1}^3 \sum_{q} \delta_q(t)\chi_{q,i}(t)\sigma_i,
\end{equation}
where $q$ indexes a set of uncorrelated stochastic variables $\delta_q(t)$, $\chi_{q,i}(t)$ contains the sensitivity of the control parameters (which generally can be a function of the parameters themselves) to $\delta_q(t)$, and $\sigma_i$ are Pauli operators. For sufficiently weak noise, the average gate infidelity $\expval{\mathcal{I}}$ of the noisy evolution $U(t)$, which satisfies $i \dot{U}(t) = H(t) U(t)$ where $U(0)=\mathbbm{1}$, can be computed perturbatively. Up to the first-order Magnus expansion, we can compactly express the gate infidelity as (see Appendix \ref{app:derivation})
\begin{equation}
\label{eq:avg-infidelity}
\expval{\mathcal{I}} \approx \frac{1}{2\pi} \sum_{q} \int_{-\infty}^{\infty} S_{q}(\omega) F_{q}(\omega)\,\mathrm{d}\omega,
\end{equation}
where $S_q(\omega)$ denotes the noise PSD for the stochastic variable $\delta_q(t)$ and $F_{q}(\omega)$ is the corresponding first-order filter function which can be calculated using the following equations:
\begin{gather}
\label{eq:filter-function}
F_{q}(\omega) = \sum_k \left| R_{q,k}(\omega)\right|^2, \\
\label{eq:R-freq-dom}
R_{q,k}(\omega) = \sum_i\int_{0}^{T} \chi_{q,i}(t) R_{ik}(t) \mathrm{e}^{i\omega t}\,\mathrm{d}t, \\
\label{eq:adjoint-rep}
R_{ik}(t) = \frac{1}{2}\tr\left(U_{c}^{\dagger}(t)\sigma_{i}U_{c}(t)\sigma_{k}\right),
\end{gather}
where $T$ is the gate time. 

Combining Equations \eqref{eq:time-evolution} and \eqref{eq:adjoint-rep} allows us to express Eq.~\eqref{eq:R-freq-dom} as
\begin{equation}
\begin{aligned}
& R_{q,k}(\omega) = \frac{1}{2} \sum_{i,n,n'} \mel**{\phi_n(0)}{\sigma_k}{\phi_{n'}(0)}\\
& \times \int_{0}^{T} \mathrm{e}^{i(\alpha_n(t)-\alpha_{n'}(t)+\omega t)} \chi_{q,i}(t) \mel**{\phi_{n'}(t)}{\sigma_i}{\phi_n(t)}\,\mathrm{d}t.
\end{aligned}
\end{equation}
Thus, the filter function corresponding to $\delta_q(t)$ is given by
\begin{equation}
\begin{aligned}
\label{eq:filter-function-intermediate}
&F_{q}(\omega) = \sum_k R_{q,k}(\omega) R_{q,k}^{*}(\omega)\\
&= \frac{1}{4} \int_{0}^{T} \int_{0}^{T} \sum_{\substack{i,j,k,\\n,m,n^\prime , m^\prime}} \mel**{\phi_n(0)}{\sigma_k}{\phi_{n^{'}}(0)} \mel**{\phi_m(0)}{\sigma_k}{\phi_{m^{'}}(0)}\\
& \mathrm{e}^{i(\alpha_n(t_1)-\alpha_{n'}(t_1)+\omega t_1)} \chi_{q,i}(t_1) \mel**{\phi_{n'}(t_1)}{\sigma_i}{\phi_n(t_1)}\,\mathrm{d}t_1\\
&\mathrm{e}^{i(\alpha_m(t_2)-\alpha_{m'}(t_2)-\omega t_2)} \chi_{q,j}(t_2) \mel**{\phi_{m'}(t_2)}{\sigma_j}{\phi_m(t_2)}\,\mathrm{d}t_2.
\end{aligned}
\end{equation}
For a given $n$, $n'$, $m$, and $m'$, the $k$-dependent factors of this sum yields
\begin{align}
\label{eq:sum-k}
& \sum_{k} \mel**{\phi_n(0)}{\sigma_k}{\phi_{n'}(0)} \mel**{\phi_m(0)}{\sigma_k}{\phi_{m'}(0)} \nonumber\\
& =
\begin{cases}
1 & \text{if } \left\lbrace n,n',m,m' \right\rbrace = \left\lbrace \pm, \pm, \pm, \pm \right\rbrace \\
-1 & \text{if } \left\lbrace n,n',m,m' \right\rbrace = \left\lbrace \pm, \pm, \mp, \mp \right\rbrace \\
2 & \text{if } \left\lbrace n,n',m,m' \right\rbrace = \left\lbrace \pm, \mp, \mp, \pm \right\rbrace \\
0 & \text{otherwise}
\end{cases}.
\end{align}
We can use Eqs.~\eqref{eq:eigenvectors-1}, \eqref{eq:eigenvectors-2}, \eqref{eq:sum-k} as well as the fact that $\mel**{\phi_{\pm}(t)}{\sigma_{k}}{\phi_{\pm}(t)} = -\mel**{\phi_{\mp}(t)}{\sigma_{k}}{\phi_{\mp}(t)}$ to simplify Eq.~\eqref{eq:filter-function-intermediate} into
\begin{widetext}
\begin{align}
\label{eq:penultimate-FF}
    F_{q}(\omega) &= \sum_{i,j}\left(\int_{0}^{T}\mel**{\phi_+(t)}{\sigma_i}{\phi_+(t)} \chi_{q,i}(t)e^{\imath \omega t}\,\mathrm{d}t\right) \left(\int_{0}^{T}\mel**{\phi_+(t)}{\sigma_j}{\phi_+(t)} \chi_{q,j}(t)e^{-\imath \omega t}\,\mathrm{d}t\right)\nonumber\\
    &+\frac{1}{2}\left(\int_{0}^{T}\mel**{\phi_-(t)}{\sigma_i}{\phi_+(t)} \chi_{q,i}(t)e^{\imath 2\alpha(t) + \imath \omega t}\,\mathrm{d}t\right) \left(\int_{0}^{T} \mel**{\phi_+(t)}{\sigma_j}{\phi_-(t)} \chi_{q,j}(t)e^{-\imath 2\alpha(t) - \imath \omega t}\mathrm{d}t\right)\nonumber\\
    &+\frac{1}{2}\left(\int_{0}^{T}\mel**{\phi_+(t)}{\sigma_i}{\phi_-(t)} \chi_{q,i}(t)e^{-\imath 2\alpha(t) + \imath \omega t}\,\mathrm{d}t\right) \left(\int_{0}^{T} \mel**{\phi_-(t)}{\sigma_j}{\phi_+(t)} \chi_{q,j}(t)e^{\imath 2\alpha(t) - \imath \omega t}\,\mathrm{d}t\right).
\end{align}
\end{widetext}
Finally, substituting in Eqs.~\eqref{eq:eigenvectors-1} and \eqref{eq:eigenvectors-2} allows us to compactly write Eq.~\eqref{eq:penultimate-FF} in the following vectorized expression:
\begin{equation}
\label{eq:filter-function-final}
    F_q(\omega) = \norm{\int_{0}^{T} \Lambda(t) \begin{bmatrix}
        \chi_{q,X}(t)\\\chi_{q,Y}(t)\\\chi_{q,Z}(t)
    \end{bmatrix} e^{i\omega t}\,\mathrm{d}t}^2,
\end{equation}
where the entries of the matrix $\Lambda$ are given by
\begin{equation}
\begin{aligned}
\Lambda_{11} &= \cos\beta \sin\gamma,\\
\Lambda_{12} &= \sin\beta \sin\gamma,\\
\Lambda_{13} &= \cos\gamma,\\
\Lambda_{21} &= -\cos\beta\cos\gamma\cos\zeta - \sin\beta\sin\zeta,\\
\Lambda_{22} &= -\sin\beta\cos\gamma\cos\zeta + \cos\beta\sin\zeta,\\
\Lambda_{23} &= \sin\gamma\cos\zeta,\\
\Lambda_{31} &= -\cos\beta\cos\gamma\sin\zeta + \sin\beta\cos\zeta,\\
\Lambda_{32} &= -\sin\beta\cos\gamma\sin\zeta - \cos\beta\cos\zeta,\\
\Lambda_{33} &= \sin\gamma\sin\zeta.
\end{aligned}
\end{equation}
This is our main result and we show in the following sections some examples of its utility. Before we proceed, we comment on the form of Eq.~\eqref{eq:filter-function-final}. First, although the similarity between Eqs.~\eqref{eq:filter-function}-\eqref{eq:R-freq-dom} and \eqref{eq:filter-function-final} might seem to suggest that $R(t)$ and $\Lambda(t)$ are identical and we have not really simplified anything, in fact what we have done is to note that the dependence of $R(t)$ on the value of the dynamical invariant parameters evaluated at $t = 0$ does not affect the filter function value, and $\Lambda(t)$ does not carry that extraneous dependence. Second, certain error models admit an alternative interpretation for Eq.~\eqref{eq:filter-function-final}. For example, suppose we consider the dephasing and over-rotation noise models. The former can be induced by an additive shift to the qubit detuning, $\Delta(t) \rightarrow \Delta(t) + \delta_\Delta(t)$, and the latter can be induced by a multiplicative shift in the pulse amplitude, $\Omega(t) \rightarrow \Omega(t)\left(1 + \delta_\Omega(t)\right)$. The corresponding error sensitivities are $\bm{\chi}_{\Delta}(t) = \frac{1}{2}\left[0,0,1\right]^\intercal$ and $\bm{\chi}_{\Omega}(t) = \frac{1}{2}\left[\Omega\cos\varphi,\Omega\sin\varphi,0\right]^\intercal$. Substituting these expressions onto Eq.~\eqref{eq:filter-function-final} yields the following filter functions
\begin{gather}
\label{eq:FF-Delta}
    F_\Delta(\omega) = \norm{\bigintsss_{0}^{T} \frac{1}{2}\begin{bmatrix}
        \cos\gamma\\\sin\gamma\cos\zeta\\\sin\gamma\sin\zeta
    \end{bmatrix} e^{i\omega t}\,\mathrm{d}t}^2,\\
\label{eq:FF-Omega}
    F_\Omega(\omega) = \norm{\bigintsss_{0}^{T} \frac{1}{2}\begin{bmatrix}
        \dot{\zeta}\sin^2\gamma\\\dot{\zeta}\sin\gamma\cos\gamma\cos\zeta + \dot{\gamma}\sin\zeta\\\dot{\zeta}\sin\gamma\cos\gamma\sin\zeta - \dot{\gamma}\cos\zeta
    \end{bmatrix} e^{i\omega t}\,\mathrm{d}t}^2.
\end{gather}

Up to a scalar factor, the detuning filter function in Eq.~\eqref{eq:FF-Delta} can be reinterpreted as a position vector with constant speed $\dot{\vec{r}} = \left[\cos\gamma,-\sin\gamma\cos\zeta,-\sin\gamma\sin\zeta\right]$~\footnote{The sign difference in comparison with Eq.~\eqref{eq:FF-Delta} is a consequence of our choice of parameterization for the dynamical invariant eigenvectors and is irrelevant since only the magnitude of $\dot{\vec{r}}$ matters.}. If robustness at a certain noise frequency is defined by a vanishing filter function value, robustness against static detuning noise (i.e., at $\omega = 0$) is equivalent to having the position vector trace a closed three-dimensional curve whose curvature $\kappa$ is given by $\Omega$. Such a geometric interpretation has been noted previously in the literature~\cite{Zeng_2018,Zeng_2018_2,Zeng_2019,Buterakos_2021,Dong_2021,Barnes_2021,Zhuang_2022}. 

A similar observation can be made for the pulse amplitude filter function. Note that the vector in the integrand of Eq.~\eqref{eq:FF-Omega} is equivalent to $\dot{\vec{r}} \cross \ddot{\vec{r}}$. This can be rewritten as $\Omega\,\vec{b}$~\cite{Kreyszig_2013}, where $\vec{b}$ is the binormal vector corresponding to $\vec{r}$ and we have used the fact that the curvature $\kappa = \Omega$. Therefore, constructing a quantum gate that is simultaneously robust against static detuning and pulse amplitude noise is mathematically equivalent to finding a closed three-dimensional curve such that $\int_{0}^{T} \Omega(t)\vec{b}(t) \,\mathrm{d}t = \vec{0}$. As far as we know, this has not been noted before.

\section{Broadband Noise Optimization}
\label{sec:results}

We demonstrated in Sec.~\ref{sec:dynamical-invariants} that it is possible through Hamiltonian reverse engineering to analytically calculate the filter function of an arbitrary one-qubit gate in terms of the dynamical invariant parameters $\beta(t)$, $\gamma(t)$, and $\zeta(t)$ as well as the sensitivity $\chi_{q,i}(t)$. One immediate implication of this result is the possibility of filter function engineering which can be used for error suppression~\cite{Ball_2021,Bentley_2020,Baum_2021,Carvalho_2021} or quantum sensing~\cite{Norris_2018}. In the context of error suppression, we can use Eq.~\eqref{eq:filter-function-final} to define a cost function which can be minimized in spectral regions where the noise PSD is dominant. This approach allows us to target any robust one-qubit gate provided that we can find an appropriate $\gamma(t)$ and $\beta(t)$. Furthermore, this is different from previous filter function engineering results since calculating the evolution operator is no longer necessary, which helps to reduce the computational workload of the optimization framework.

We consider again as an example the case where our system is subject to detuning and pulse amplitude noise. Note that both Eqs.~\eqref{eq:FF-Delta} and \eqref{eq:FF-Omega} depend only on $\gamma$ and $\zeta$. This means that $\beta$ is a free parameter up to the boundary conditions imposed by the reverse engineering process. This extra degree of freedom can be used to impose control restrictions such as strict two-axis control. Combining Eqs.~\eqref{eq:auxiliary-equations-1}, \eqref{eq:auxiliary-equations-2}, and \eqref{eq:zeta} provides us with the reverse engineered Hamiltonian parameters in terms of the dynamical invariant parameters:
\begin{gather}
    \label{eq:Omega-IE}
    \Omega = \sqrt{\dot{\gamma}^2 + \dot{\zeta}^2\sin^2\gamma}\\
    \label{eq:varphi-IE}
    \varphi = \beta - \arctan \frac{\dot{\gamma}}{\dot{\zeta}\sin\gamma}\\
    \label{eq:Delta-IE}
    \Delta = \dot{\beta} - \dot{\zeta}\cos\gamma.
\end{gather}
For simplicity, we can set $\Delta = 0$ by solving the differential equation $\dot{\beta} = \dot{\zeta}\cos\gamma$ for $\beta$ with the boundary condition $\beta(0) = -\zeta(0)$. Thus, all properties of the output gate is determined by $\gamma$ and $\zeta$.

Restricting $\beta$ in this manner does not necessarily diminish our ability to target arbitrary one-qubit gates. In practice, a finite set of quantum gates are used to target arbitrary operations. Although we can engineer $\gamma$ and $\zeta$ to target gates directly, it is worth pointing out that many qubit implementations have access to virtual $Z$ ($\textsc{vz}$) gates~\cite{Knill_2000,Knill_2008,Johnson_2015,McKay_2017}. These zero-duration gates are essentially perfect and implemented through abrupt changes to the reference phase. We can take advantage of virtual gates by noting that any one-qubit operation can be decomposed into the product of $Z$ gates and two $X_{\frac{\pi}{2}}$~\cite{McKay_2017}: 
\begin{equation}
\label{eq:U-target-decomp}
U_{\text{target}} = Z_{\theta_1} X_\frac{\pi}{2} Z_{\theta_2} X_\frac{\pi}{2} Z_{\theta_3}.
\end{equation}
More generally, the reverse engineering method allows us to replace $X_\frac{\pi}{2}$ in the gate decomposition with $U_c(T)$. We can rewrite the engineered gate in Eq.~\eqref{eq:time-evolution-rewrite} as
\begin{align}
U_c(T) &= Z_{\beta(T)} Y_{\gamma(T)} Z_{\zeta(0)-\zeta(T)} Y_{-\gamma(0)} Z_{-\beta(0)} \nonumber\\
\label{eq:zxz-decomposition}
&= Z_{\psi_1} X_{\theta} Z_{\psi_2},
\end{align}
where
\begin{align}
\cos\left(\theta\right) &= \cos(\zeta(0)-\zeta(T)) \sin(\gamma(T))\sin(\gamma(0)) \nonumber \\
\label{eq:theta}
&\qquad + \cos(\gamma(T))\cos(\gamma(0)) ,
\end{align}
and $\psi_1$ and $\psi_2$ are angles that depend on the target gate. Setting $\theta = \frac{\pi}{2}$, we find that $X_\frac{\pi}{2} = Z_{-\psi_1} U_c(T) Z_{-\psi_2}$. This expression can be substituted onto Eq.~\eqref{eq:U-target-decomp} which yields
\begin{equation}
\label{eq:U-target-decomp-2}
U_\text{target} = Z_{\theta_1 - \psi_1} U_c(T) Z_{\theta_2 - \psi_1 - \psi_2} U_c(T) Z_{\theta_3 - \psi_2}.
\end{equation}
Since $Z$ gates are executed virtually, we only need one physical gate, $U_c(T)$ with $\theta = \frac{\pi}{2}$, to produce any one-qubit operation.

Hence, our goal is to optimize $U_c(T)$ by minimizing the following cost function:
\begin{gather}
    \text{cost} = c_1 \int_{-\infty}^{\infty} F_\Delta(\omega)S_\Delta(\omega)\,\mathrm{d}\omega + c_2 \int_{-\infty}^{\infty} F_\Omega(\omega)S_\Omega(\omega)\,\mathrm{d}\omega \nonumber\\
    + c_3 \abs{\cos\left(\zeta(0)-\zeta(T)\right)\sin\gamma(T)\sin\gamma(0) + \cos\gamma(T)\cos\gamma(0)} \nonumber\\
    + c_4 \abs{\frac{\Omega(0)}{\Omega_\text{max}}} + c_5 \abs{\frac{\Omega(T)}{\Omega_\text{max}}} + c_6 \sum_i \text{max}\left(0,\frac{\Omega(t_i)}{\Omega_\text{max}} - 1\right) \nonumber\\
    + c_7 \sum_i \text{max}\left(0,\frac{\abs{\dot{\Omega}(t_i)}}{\Omega_\text{max}/T_\text{ramp}} - 1\right).
    \label{eq:cost}
\end{gather}
The first two terms correspond to the infidelity integrals for detuning and amplitude noise with noise PSD $S_\Delta$ and $S_\Omega$, respectively. The third term is the constraint that targets $\theta = \frac{\pi}{2}$. The fourth and fifth term sets the boundary value of the pulse amplitude to zero~\footnote{These constraints are not necessary but they help with the overall experimental feasibility of the pulses we produce.}. The sixth term imposes a maximum value $\Omega_\text{max}$ on $\Omega$ by discretizing the interval $\left[0,T\right]$ and evaluating $\Omega$ at each time value. The cost penalizes any point where $\Omega(t_i) > \Omega_\text{max}$ through the function
\begin{equation}
\max(0,x) = \begin{cases}
0 & x\leq0\\
x & x>0
\end{cases}.
\end{equation}
The seventh term imposes a bound on the slope of $\Omega$. This accounts for the slew rate of the hardware that produces our control pulse. We assume a maximum rate of change of $\Omega_\text{max}/T_\text{ramp}$. Finally, $c_i$ are weighting parameters that can be adjusted to ensure that the constraints are satisfied.

We demonstrate the flexibility of our approach by considering two examples. We first consider a case where the goal is to produce a gate that acts as a stopband filter against $1/f$ detuning and pulse amplitude noise. We then consider a case where the goal is to produce a gate that is optimal in the presence of $1/f$ pulse amplitude noise and a static detuning noise. To this end, we employ deep neural networks~\cite{LeCun_2015,Schmidhuber_2015} as our optimization framework. The power of neural networks originates from their ability to represent complex ideas as a hierarchy of simpler concepts. This allows them to efficiently identify key abstract properties of a problem, which is highly coveted in tasks such as pattern recognition~\cite{Krizhevsky_2017}. It has also been proven that neural networks with sufficient neurons and layers can act as a universal function approximator~\cite{Cybenko_1989,Hornik_1991}. This is ideal for our purpose since it eliminates the nontrivial task of finding suitably parameterized ansatz function to optimize over that will yield convergent solutions. Furthermore, machine learning frameworks tend to have built-in automatic differentiation capabilities which can be utilized for gradient-based optimization.

\begin{figure}
\centering
\includegraphics[scale=.341]{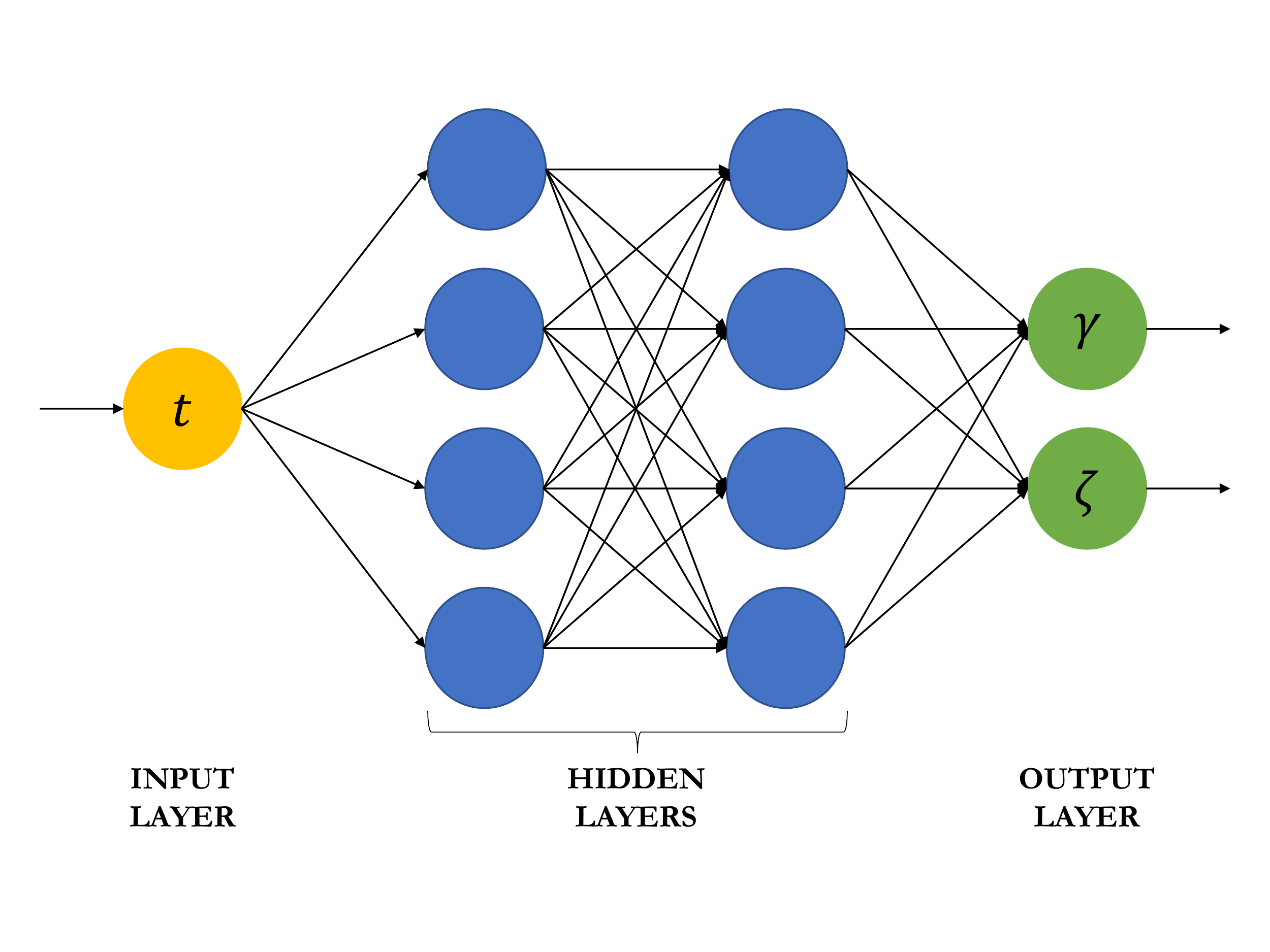}
\caption{A schematic diagram of a feedforward deep neural network with one input neuron, two output neurons, and two hidden layers with four neurons each. A neural network is deep if it has at least two hidden layers. As information flows from the input layer, each subsequent layer nonlinearly transforms incoming information and returns a value. The goal is to train the neural network so that the final output optimizes the cost. In our case, we would like to train a neural network to take time as input and return the optimized dynamical invariant parameters $\gamma$ and $\zeta$.}
\label{fig:neural-network}
\end{figure}

In particular, we use a feedforward neural network (sometimes referred to as multilayer perceptron) which is constructed using layers of interconnected computational units called neurons such that information travels only in one direction; starting with an input layer, then a series of hidden layers, and finally onto an output layer. A schematic diagram of a feedforward neural network is shown in Fig.~\ref{fig:neural-network}. Each adjacent layers act as a function that takes a vector input and produces a vector output using the following model
\begin{equation}
\label{eq:dnn-model}
\bm{x}_{i+1} = \sigma\left(W_i \bm{x}_i + \bm{b}_i\right),
\end{equation}
where $\bm{x}_i$ is the input in the $i^\text{th}$ layer, $W_i$ is a matrix that describes the neural connections between the $i^\text{th}$ and $(i+1)^\text{th}$ layer, $\bm{b}_i$ is a bias vector, and $\sigma(\cdot)$ is a nonlinear activation function such as $\text{max}(0,\cdot)$ or $\tanh(\cdot)$. Our goal is to train the neural network using optimization algorithms (e.g., ADAM~\cite{Kingma_2014}, L-BFGS~\cite{Liu_1989}, and BFGS~\cite{Fletcher_2000}) to return the optimized dynamical invariant parameters $\gamma$ and $\zeta$ on the output layer by feeding in time on the input layer. For our optimization we use a feedforward deep neural network with one input neuron, two hidden layers with 32 neurons each and a $\tanh$ activation function, and two output neurons for a total of 1186 parameters~\footnote{In feedforward neural networks, each neural connection adds one free parameter. Furthermore, each receiving neuron applies a bias parameter to incoming data. Thus, if we have a 1-3-2 network (one input neuron, one hidden layer with three neurons, and two output neurons), we have $(1*3+3) + (3*2+2) = 14$ free parameters to optimize. In our work, we used a 1-32-32-2 network that has $(1*32+32) + (32*32+32) + (32*2+2) = 1186$ free parameters.}.

\subsection{\texorpdfstring{$1/f$}{1/f} stopband filter for both detuning and pulse amplitude noise}\label{subsec:both1overf}

\begin{figure}
\centering
\includegraphics[scale=.485]{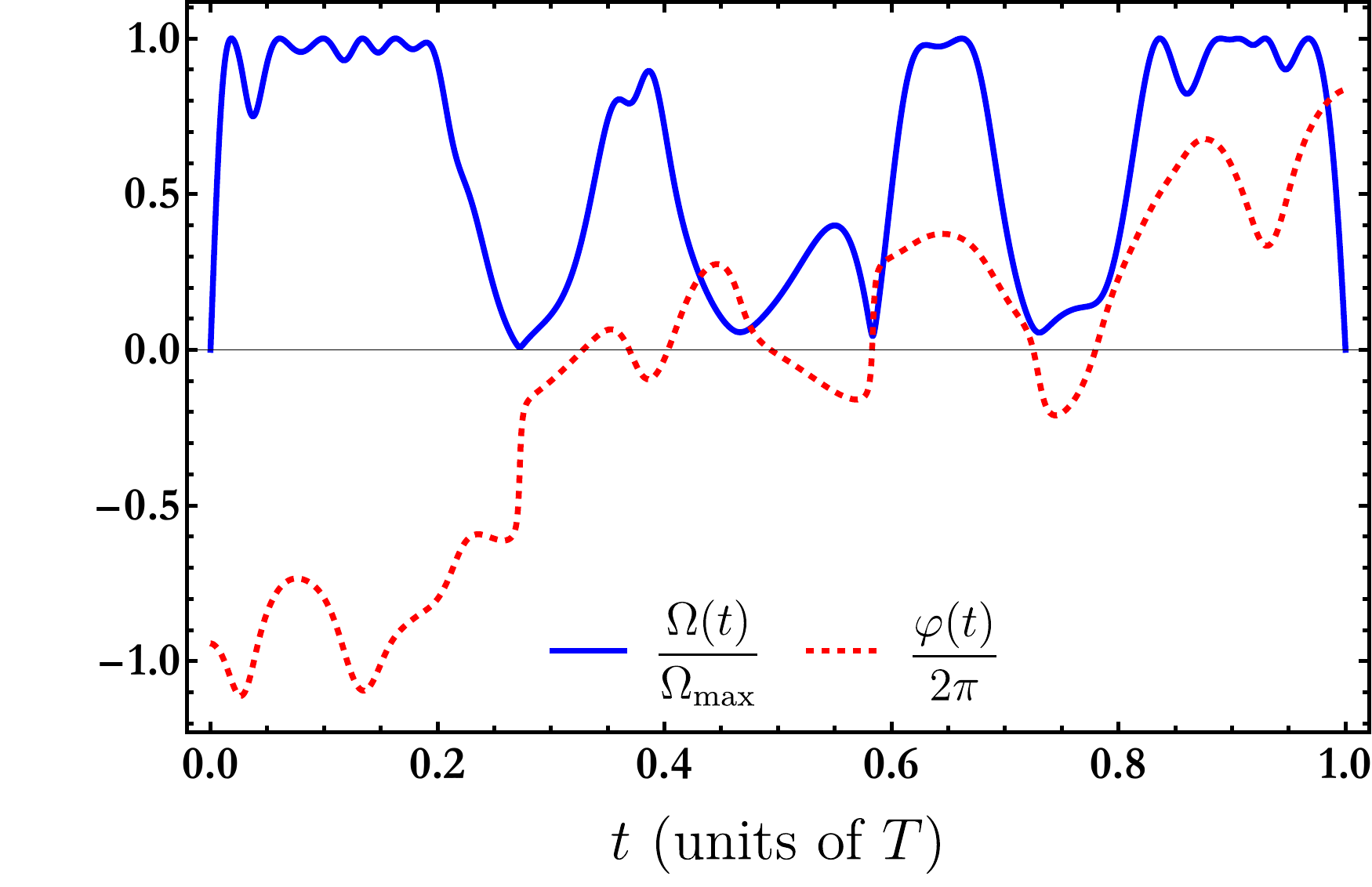}
\centering
\includegraphics[scale=.485]{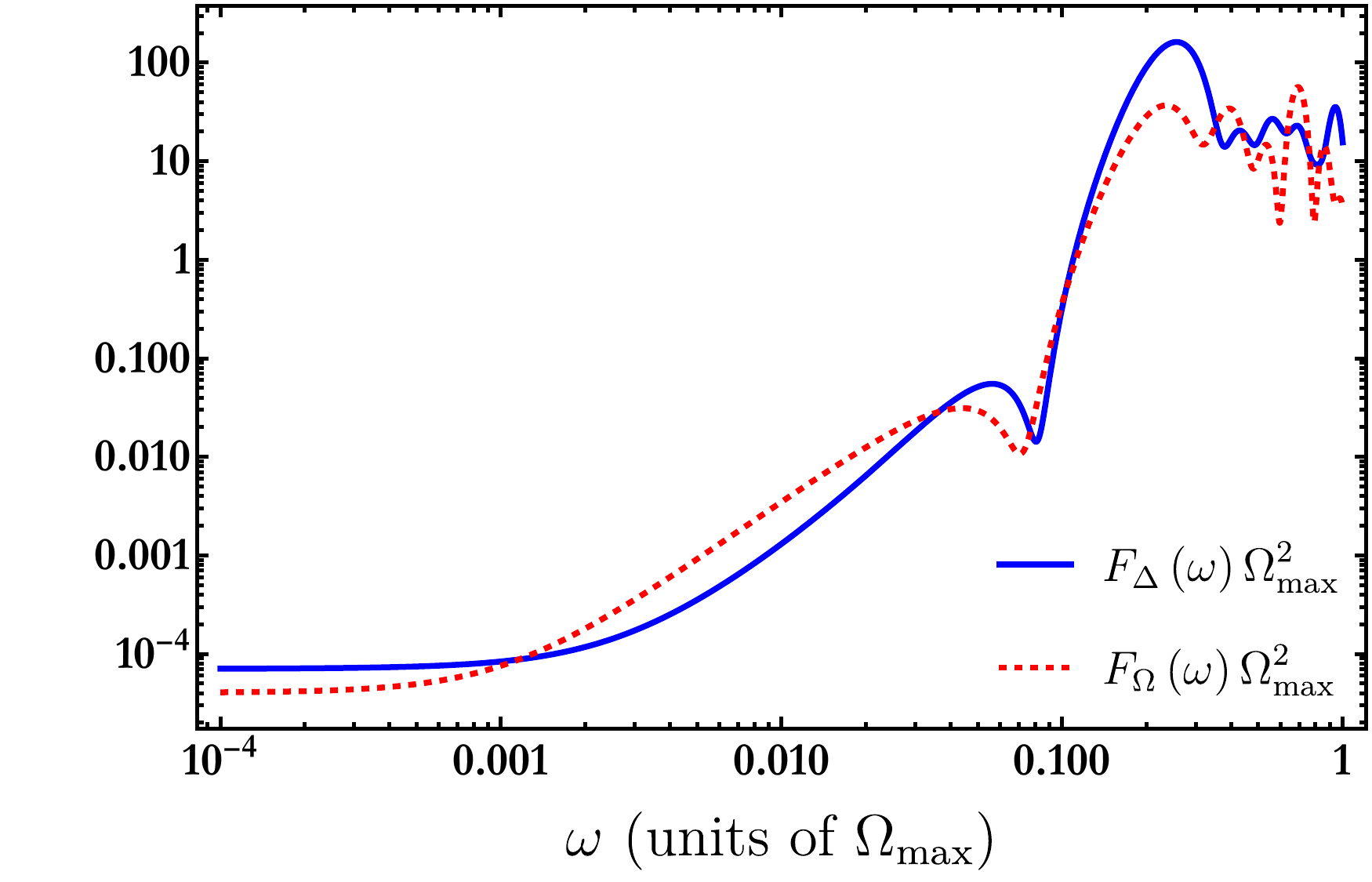}
\caption{A plot of the optimized Hamiltonian (TOP) and filter function (BOTTOM) for the case of simultaneous $1/f$ detuning and pulse amplitude noise over a finite frequency range.}
\label{fig:case1}
\end{figure}

For our first example, we consider identical noise PSD for detuning and amplitude noise:
\begin{equation}
\label{eq:PSD-1}
S_\Delta(\omega) = S_\Omega(\omega) = \begin{cases} \frac{A}{\omega} & \omega_0 \leq \abs{\omega} \leq \omega_c\\ 0 & \text{otherwise} \end{cases},
\end{equation}
where $\left[\omega_0,\omega_c\right]$ defines the frequency stopband in which we wish to suppress noise. We set $\omega_0 = 10^{-9}\Omega_\text{max}$, $\omega_c = 10^{-1}\Omega_\text{max}$, $T = 16\pi/\Omega_\text{max}$, and $T_\text{ramp} = 0.5/\Omega_\text{max}$. We present in Fig.~\ref{fig:case1} a plot of the optimized control fields and filter functions. The details of our numerical optimization scheme is provided in App.~\ref{app:method}.

We see from Fig.~\ref{fig:case1} that the control pulse we produced satisfies the imposed constraints. We compare the total infidelity of our optimized pulse with that of known pulse sequences in the literature that address either detuning noise, pulse amplitude noise, or both. We present in Table~\ref{table:DNN} a summary of these comparisons. We establish a fixed reference point by setting the noise PSD amplitude $A$ so that the naive pulse has an infidelity of $10^{-1}$. Furthermore, we also assume that the Magnus expansion converges and that the first-order filter function is sufficient to estimate the infidelity (see App.~\ref{app:derivation}). The reverse engineered gate can be related to $X_\frac{\pi}{2}$ (up to a global phase factor) by using $\psi_1 = -1.1617\pi$ and $\psi_2 = 1.7348\pi$ in Eq.~\eqref{eq:zxz-decomposition}. We find that our broadband optimized pulse yields an infidelity that is at least an order of magnitude lower than than any other pulse sequences. Specifically, the minimum improvement is roughly a factor of 27 which is a comparison with the concatenated CORPSE~\cite{Cummins_2000,Cummins_2001,Cummins_2003} and BB1~\cite{Wimperis_1994} pulse sequence (CinBB)~\cite{Bando_2013}. CinBB is designed to mitigate static additive detuning and multiplicative pulse amplitude noise simultaneously. The difference in performance between CinBB and our engineered pulse can be attributed to the fact that composite pulse sequences are generally designed to suppress static noise. Although composite pulses offer some protection against noise in the quasistatic frequency regime, their ability to suppress noise that fluctuate on the order of $\Omega_\text{max}$ is severely limited. At worst, they can even amplify the detrimental effects of such noise sources. 

Broadly speaking, suppressing noise that fluctuate at a certain frequency would require control field modulation at a higher frequency~\cite{Green_2013}. Our optimization scheme takes advantage of this fact by generating pulse shapes with reduced frequency response (as characterized by the filter function) inside the stopband. On the other hand, this also causes the optimized pulse to respond strongly to noise frequencies above $\omega_c$. In other words, the performance improvement in our optimized pulse comes at the cost of increased noise sensitivity in frequency regions beyond the indicated stopband. This behavior is typical when suppressing broadband noise and can be addressed by modifying the stopband range~\cite{Ball_2015}. We note that constraining $\dot{\Omega}$ to account for hardware limitations can prevent the optimizer from finding solutions that effectively suppress the target noise. 

\begin{figure}
\centering
\includegraphics[scale=.485]{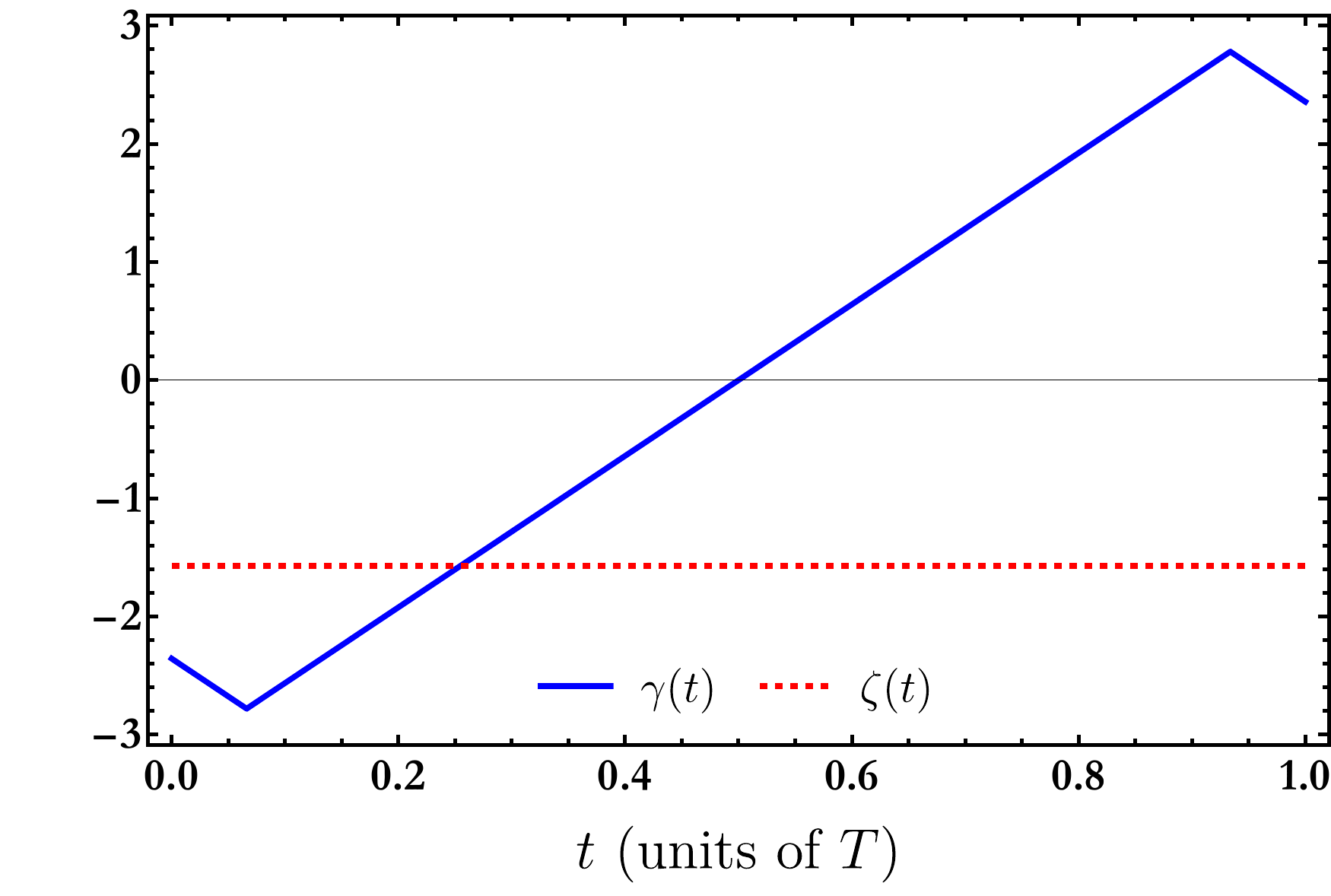}
\caption{A plot of the dynamical invariant parameters $\gamma$ and $\zeta$ for the CORPSE pulse sequence targeting $X_\frac{\pi}{2}$.}
\label{fig:symmetry}
\end{figure}

We can also investigate the effects of symmetry using our theoretical framework. We say a pulse is symmetric if $\Omega(t) = \Omega(T-t)$ and antisymmetric if $\Omega(t) = -\Omega(T-t)$. Symmetry arguments have been used in certain static noise models to analytically derive robustness conditions for the control parameters~\cite{Daems_2013,Barnes_2015,Gungordu_2019}. Since we defined noise robustness at a certain frequency by a vanishing filter function value, enforcing static noise robustness effectively turns Eq.~\eqref{eq:filter-function-final} into a vector of average integrals. If the dynamical invariant parameters $\gamma$ and $\zeta$ are symmetric or antisymmetric during the evolution (which then produces a symmetric $\Omega$), then certain choice of parameters can cause these averages to simultaneously vanish. One particular example is the CORPSE pulse sequence whose dynamical invariant parameters are shown in Fig.~\ref{fig:symmetry}. Here the antisymmetric $\gamma$ and symmetric $\zeta$ lead to robustness against static detuning noise since $F_\Delta(0) = 0$. We emphasize, however, that symmetry is not necessary to produce robust control fields. In general, there are infinitely many ways to choose $\gamma$ and $\zeta$ that lack symmetry properties but still satisfy the condition that $F_\Delta(0)$ (and/or $F_\Omega(0)$) equals zero~\footnote{For example, since the Hamiltonian control parameters and the dynamical invariant parameters are related by coupled ODEs, their correspondence is not unique. Changing the initial condition of the ODEs allows us to produce the same CORPSE pulse sequence in Fig.~\ref{fig:symmetry} using asymmetric $\gamma$ and $\zeta$.}

\subsection{Static detuning and \texorpdfstring{$1/f$}{1/f} pulse amplitude noise}\label{subsec:amp1overf}

\begin{figure}
\centering
\includegraphics[scale=.485]{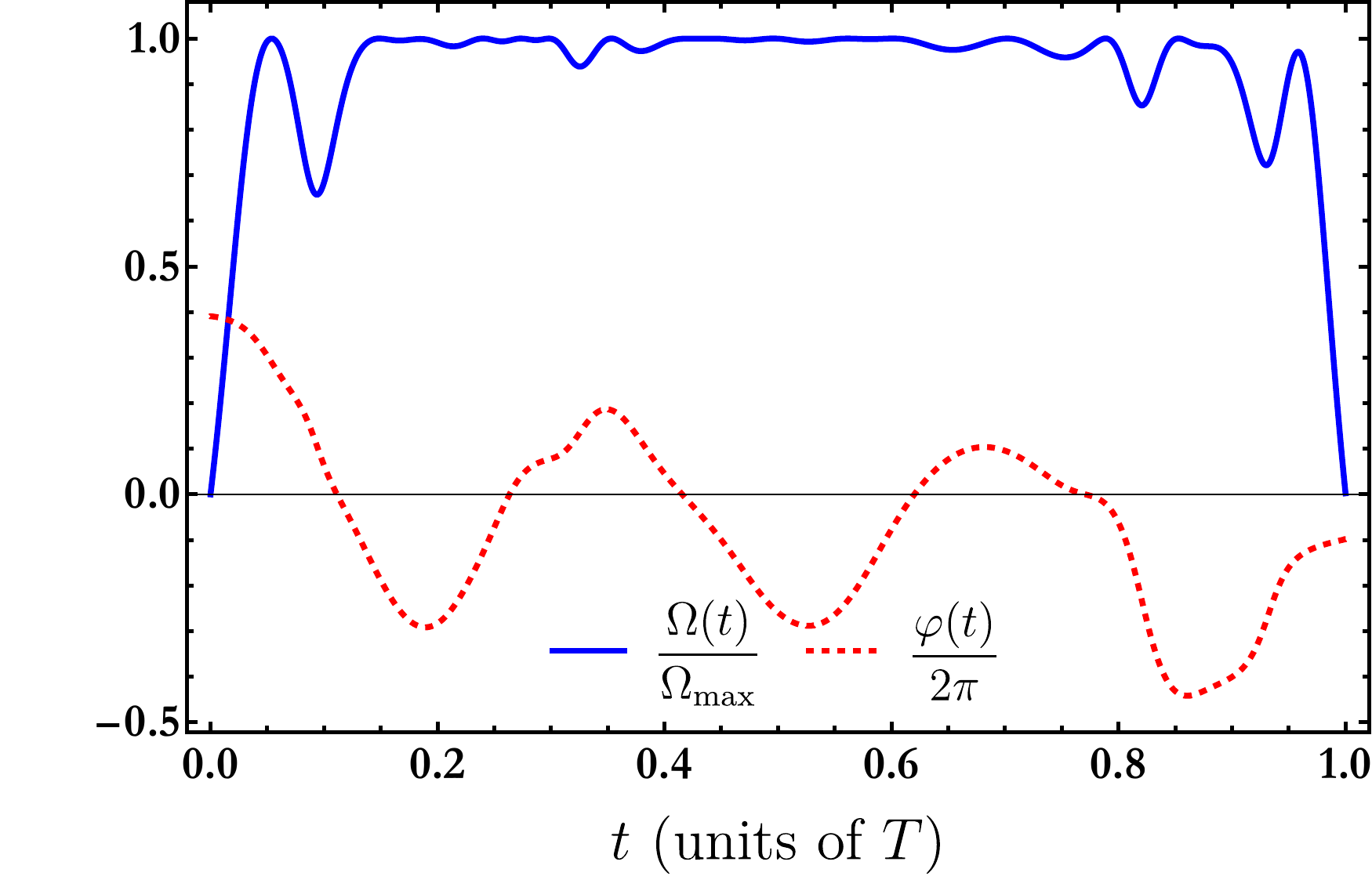}
\centering
\includegraphics[scale=.485]{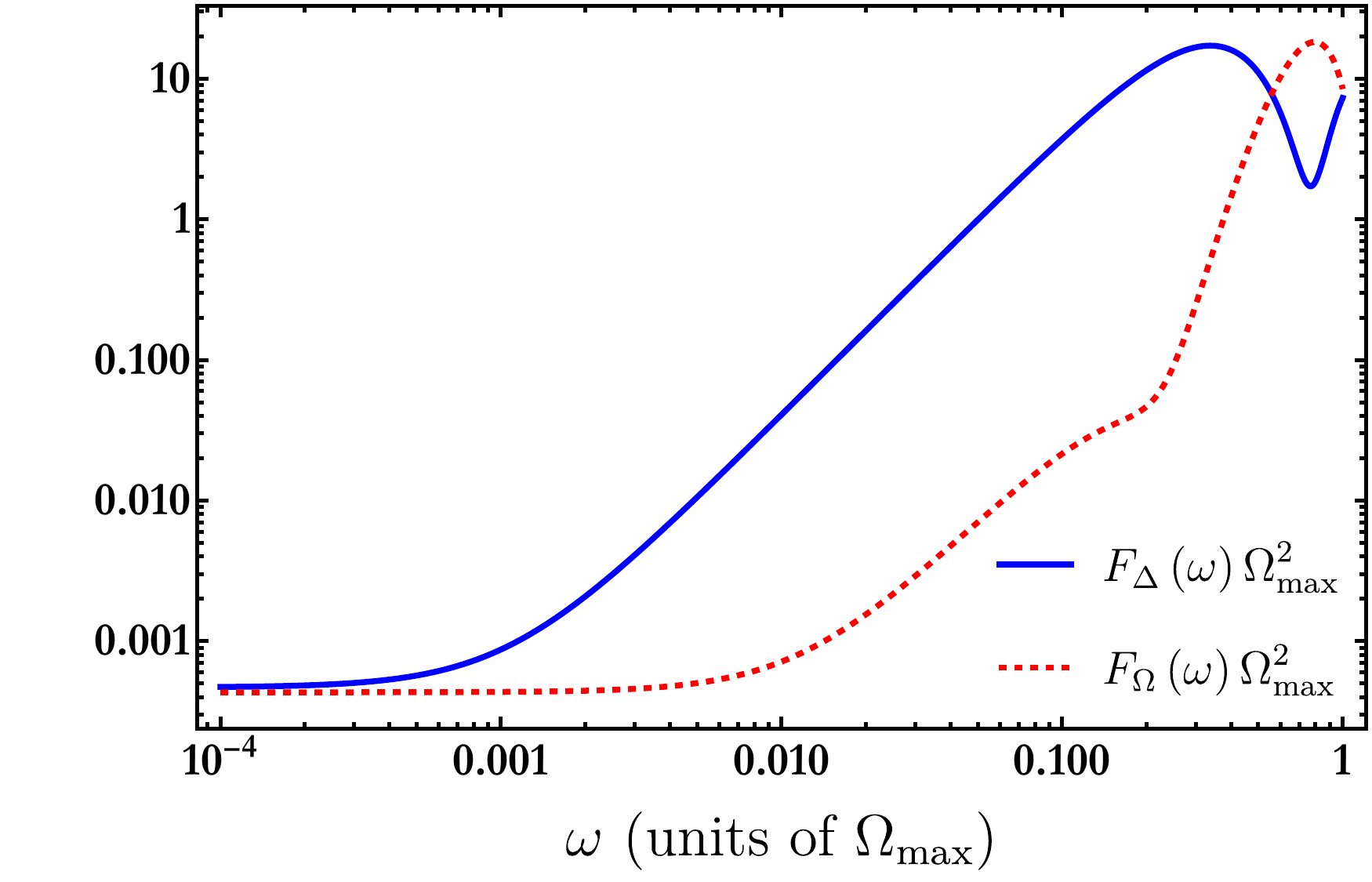}
\caption{A plot of the optimized Hamiltonian (TOP) and filter function (BOTTOM) for the case of static detuning noise and $1/f$ pulse amplitude noise. Unlike the previous example, the $1/f$ spectrum here has a $1/f^2$ tail which penalizes large filter function values in the $\omega_c \leq \abs{\omega}$ region.}
\label{fig:case2}
\end{figure}

For our second example, we consider the case where we have a static detuning noise as well as a $1/f$ pulse amplitude noise:
\begin{gather}
    S_\Delta(\omega) = 10 A \delta(\omega),\\
    \label{eq:PSD-2}
    S_\Omega(\omega) = \begin{cases} 0 & 0 \leq \abs{\omega} \leq \omega_0 \\ \frac{A}{\omega} & \omega_0 \leq \abs{\omega} \leq \omega_c\\ \frac{A \omega_c}{\omega^2} & \omega_c \leq \abs{\omega} \end{cases},
\end{gather}
where we have assumed an order of magnitude difference in the detuning and pulse amplitude noise strength. Here we set $\omega_0 = 10^{-9}\Omega_\text{max}$, $\omega_c = 10^{-1}\Omega_\text{max}$, $T = 5\pi/\Omega_\text{max}$, and $T_\text{ramp} = 0.5/\Omega_\text{max}$. We present in Fig.~\ref{fig:case2} a plot of the optimized control fields and filter functions. We again compare our optimized pulse with known pulse sequences and the results are summarized in Table~\ref{table:DNN}.

\begin{table}
\centering
\caption{A comparison of infidelities between our deep neural network (DNN) output, the naive pulse, and known composite pulse sequences. The subscript of the gate infidelity $\mathcal{I}_i$ indicates whether the case of Sec.~\ref{subsec:both1overf} or that of Sec.~\ref{subsec:amp1overf} is in consideration. The naive and composite pulses target an $X_\frac{\pi}{2}$ gate which can then be used as a building block for arbitrary one-qubit operations as shown in Eq.~\eqref{eq:U-target-decomp}. Similarly, the DNN output uses Eq.~\eqref{eq:U-target-decomp-2} to produce arbitrary one-qubit operations. We report a substantial decrease in infidelity in all cases we considered. We also indicate robustness against static detuning and/or pulse amplitude noise. Robustness is defined by a vanishing filter function at $\omega = 0$ (e.g., robustness against $\delta_\Delta$ means $F_\Delta(0) = 0$).}
\label{table:DNN}
\begin{tabular}{c|c|c|c|c}
	\hline
	\hline
	Pulse & $\mathcal{I}_\text{A}$ & $\mathcal{I}_\text{B}$ & Robust & Robust \\
	& & & to $\delta_\Delta$? & to $\delta_\Omega$? \\
	\hline
	Naive (Square) & $1.0 \times 10^{-1}$     & $1.0 \times 10^{-1}$            & No  & No\\
	Short CORPSE   & $2.1 \times 10^{-1}$     & $4.9 \times 10^{-1}$            & Yes & No\\
	BB1            & $8.3 \times 10^{-2}$     & $6.4 \times 10^{-2}$            & No  & Yes\\
	CinBB          & $7.5 \times 10^{-3}$     & $5.0 \times 10^{-2}$            & Yes & Yes\\
	CinSK          & $1.1 \times 10^{-2}$     & $1.1 \times 10^{-1}$            & Yes & Yes\\
	DNN            & $2.8 \times 10^{-4}$     & $7.4 \times 10^{-3}$            & No  & No\\
	\hline
\end{tabular}
\end{table}

The resulting gate is related to $X_\frac{\pi}{2}$ by using $\psi_1 = -1.2749\pi$ and $\psi_2 = 0.8685\pi$ in Eq.~\eqref{eq:zxz-decomposition}. Unlike the previous case, we only see a minimum improvement in infidelity by a factor of 7. In the previous example, the difference in performance is due to the fact that filter function values outside the stopband do not contribute to the infidelity. This is no longer true here due to the presence of a $1/f^2$ tail in the noise PSD that penalizes large filter function values for noise frequencies greater than $\omega_c$. Furthermore, since we cannot suppress noise that fluctuate much faster than the control fields, this effect worsens with increasing gate time. This is why we picked a smaller value of $T$ than in Sec.~\ref{subsec:both1overf}.

\section{Robustness of geometric phases}
\label{sec:robustness-of-gp}
We can also apply our result in Sec.~\ref{sec:dynamical-invariants} to explore the robustness properties of geometric quantum gates. In general, a quantum system can accumulate two types of phase~--- a dynamical phase and a geometric phase. This was first noted by Berry in the context of cyclic adiabatic evolution~\cite{Berry_1984}. In particular, it was noted that a cyclic adiabatic change in the Hamiltonian parameters produces a dynamical phase that generally depends on the duration of the evolution and a geometric phase that only depends on the geometry of the cyclic path in the Hamiltonian's parameter space. The theory of dynamical invariants can be viewed as a nonadiabatic generalization of this observation~\cite{Mostafazadeh_2001}. In particular, whereas the Hamiltonian eigenvectors form the natural basis for computing phases in the adiabatic limit, they can be replaced by dynamical invariant eigenvectors in the nonadiabatic case. Thus, analogous to Berry's result, a dynamical invariant eigenvector parameterized as in Eqs.~\eqref{eq:eigenvectors-1} or \eqref{eq:eigenvectors-2} accumulates a geometric and a dynamical phase during evolution given respectively by the following expressions:
\begin{align}
\label{eq:geometric-phase}
\alpha_{n,g}(T) &= \int_{0}^{T}\ev**{i \frac{\partial}{\partial t}}{\phi_{\pm}(t)}\,\mathrm{d}t,\\
\label{eq:dynamical-phase}
\alpha_{n,d}(T) &= -\int_{0}^{T}\ev**{H(t)}{\phi_{\pm}(t)}\,\mathrm{d}t.
\end{align}
Note that the sum of these expressions yields the Lewis-Riesenfeld phases in Eq.~\eqref{eq:lewis-riesenfeld-phase}. A geometric gate is a quantum gate for which the unitary dynamics, up to a global phase factor, is determined only by the geometric component of the total phase. This is commonly achieved by setting the integral in Eq.~\eqref{eq:dynamical-phase} to zero. Alternatively, if the qubit computational subspace is energetically degenerate, a geometric gate can still be produced even when $\alpha_{n,d}(T) \neq 0$. Since all states that belong to the subspace have the same energy, the dynamical component of the phase effectively behaves like a global phase factor. Finally, we impose the condition that $\ket{\phi_{n}(0)} = \ket{\phi_{n}(T)}$. This particular choice fixes the $U(1)$ gauge freedom on our choice of dynamical invariant eigenvectors as well as reinforce the connection between dynamical invariant theory and Berry's result.

Geometric gates are of practical interest in quantum computing due to their potential robustness against noise. Since a geometric phase depends only on the properties of its corresponding cyclic path, it is insensitive against noise that affects the speed at which the path is traversed. For this reason, geometric gates are believed to be more robust than their dynamical counterpart in certain scenarios. The validity and extent of the robustness claim remains an active area of research with many showing support for the claim~\cite{Ekert_2000, Carollo_2003, DeChiara_2003, Zhu_2005, DeChiara_2007, Wang_2007, Thomas_2011, Liang_2016, Chen_2018, Liu_2019, Chen_2020,Pachos_2001,Dong_2021,Berger_2013, Kleisler_2018, Xu_2020}. However, there are also studies that report situations in which geometric gates are not intrinsically more robust than dynamical gates~\cite{Nazir_2002,Blais_2003,Ota_2009,Zheng_2016,Dong_2021} and, in certain scenarios, their sensitivity to noise deteriorates~\cite{Solinas_2004, Carollo_2004, Zhu_2005, Dajka_2007, Johansson_2012}. It was recently shown in Ref.~\cite{Colmenar_2021} that the noise sensitivity of geometric and dynamical gates in some commonly encountered error models are generically equal in one-qubit systems with freely tuneable three-axis control. However, when control constraints are present (e.g., strict two-axis or piecewise constant control), it is possible for a particular phase type to become preferable and naturally robust.

We demonstrate in this section that a preferred phase type emerges in the case of nonadiabatic Abelian one-qubit geometric gates as a consequence of control constraints. We reiterate that a quantum gate is robust against a noise process $q$ at a particular frequency $\omega$ if $F_q(\omega) = 0$. Our reverse engineering framework is ideal for this task because it allows us to analytically compute geometric and dynamical phases in terms of the dynamical invariant parameters. Using our definition of geometric and dynamical phase in Eqs.~\eqref{eq:geometric-phase},~\eqref{eq:dynamical-phase}, the eigenvectors in Eqs.~\eqref{eq:eigenvectors-1} and \eqref{eq:eigenvectors-2}, as well as the auxiliary equations in Eqs.~\eqref{eq:auxiliary-equations-1} and \eqref{eq:auxiliary-equations-2}, we can express the geometric and dynamical phases as
\begin{align}
\label{eq:geometric-phase-2}
\alpha_{n,g}(T) &=\pm\alpha(T) \mp \int_{0}^{T} \frac{\dot{\zeta}-\dot{\beta}\cos\gamma}{2}\,\mathrm{d}t,\\
\label{eq:dynamical-phase-2}
\alpha_{n,d}(T) &= \pm \int_{0}^{T} \frac{\dot{\zeta}-\dot{\beta}\cos\gamma}{2}\,\mathrm{d}t.
\end{align}

Suppose we consider the special case of a constant detuning $\Delta$, which is a fairly common constraint in works considering geometric gates~\cite{Zhu_2002,Zhu_2005,Zhao_2017,J-Xu_2020}. We prove the following theorem for that special case by analyzing the filter function expressions that we derived:
\begin{theorem*}
Consider the one-qubit control Hamiltonian in Eq.~\eqref{eq:hamiltonian} \underline{under the constraint that $\Delta$ is constant}. Any one-qubit gate that is robust to static multiplicative amplitude noise $(\delta_\Omega)$ as well as static additive or multiplicative detuning noise $(\delta_\Delta)$ is necessarily geometric.
\end{theorem*}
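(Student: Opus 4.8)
The plan is to show that the two robustness hypotheses force the dynamical phase of Eq.~\eqref{eq:dynamical-phase-2} to vanish, which by the definition given above renders the gate geometric. Concretely, a one-qubit gate (with no assumed degeneracy) is geometric precisely when $\alpha_{n,d}(T)=0$, i.e.\ when $\int_0^T(\dot\zeta-\dot\beta\cos\gamma)\,\mathrm{d}t=0$. The whole task therefore reduces to evaluating this single integral under the stated constraints, and the $\pm$ in Eq.~\eqref{eq:dynamical-phase-2} means both eigenvectors carry the same integral, so they vanish together.

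First I would use the constant-detuning constraint to eliminate $\dot\beta$. From the reverse-engineered relation Eq.~\eqref{eq:Delta-IE}, $\Delta=\dot\beta-\dot\zeta\cos\gamma$, so $\dot\beta=\Delta+\dot\zeta\cos\gamma$ with $\Delta$ now a constant. Substituting this into the dynamical-phase integrand collapses it to $\dot\zeta-\dot\beta\cos\gamma=\dot\zeta\sin^2\gamma-\Delta\cos\gamma$. The key observation that makes the argument work is that the two surviving terms are exactly the leading components of the two filter-function integrands: $\dot\zeta\sin^2\gamma$ is the top entry of the vector in Eq.~\eqref{eq:FF-Omega}, and $\cos\gamma$ is the top entry of the vector in Eq.~\eqref{eq:FF-Delta}.

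Next I would invoke the robustness hypotheses. Robustness to a static noise process means its filter function vanishes at $\omega=0$; since each filter function is the squared norm of a vector integral and the integrands are real at $\omega=0$, $F_q(0)=0$ forces every component of that integral to vanish. Amplitude robustness then gives $\int_0^T\dot\zeta\sin^2\gamma\,\mathrm{d}t=0$, while detuning robustness gives $\int_0^T\cos\gamma\,\mathrm{d}t=0$. Both pieces of the collapsed integrand integrate to zero, so $\alpha_{n,d}(T)=0$ and the gate is geometric.

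The point needing care, rather than a genuine obstacle, is the \emph{additive or multiplicative} clause governing the $\Delta\cos\gamma$ term. For additive detuning the sensitivity is $\frac{1}{2}[0,0,1]^\intercal$ independently of $\Delta$, so detuning robustness always supplies $\int_0^T\cos\gamma\,\mathrm{d}t=0$ and kills that term. For multiplicative detuning the sensitivity picks up a factor of $\Delta$; when $\Delta\neq0$ this yields the identical condition, and when $\Delta=0$ the term $\Delta\int_0^T\cos\gamma\,\mathrm{d}t$ vanishes automatically without any detuning condition at all. The disjunction in the hypothesis is precisely what covers both the $\Delta=0$ and $\Delta\neq0$ regimes, and once these cases are separated the remaining algebra is immediate.
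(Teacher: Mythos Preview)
Your argument is correct and follows essentially the same route as the paper: rewrite the dynamical phase via Eq.~\eqref{eq:Delta-IE} so that its two pieces coincide with the leading components of the $\omega=0$ integrands in Eqs.~\eqref{eq:FF-Delta} and \eqref{eq:FF-Omega}, then kill them using the robustness hypotheses. The only noteworthy difference is in the multiplicative-detuning sub-case: the paper substitutes $\Delta=\dot\beta-\dot\zeta\cos\gamma$ back into $F_{\Delta,\times}$ and reduces its first component directly to the dynamical-phase integrand of Eq.~\eqref{eq:reduced-dynamical-phase-integral-1}, whereas you observe that for constant $\Delta\neq0$ the multiplicative filter function is simply $\Delta^2$ times the additive one (so the same $\int_0^T\cos\gamma\,\mathrm{d}t=0$ follows), and you handle $\Delta=0$ separately by noting that the offending term carries an explicit factor of $\Delta$. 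Your treatment is slightly cleaner and makes the role of the constant-$\Delta$ hypothesis more transparent in that sub-case, but the two arguments are otherwise the same.
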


\begin{proof}
Using Eq.~\eqref{eq:Delta-IE}, we can rewrite the dynamical phase integral in Eq.~\eqref{eq:dynamical-phase} as
\begin{align}
\alpha_{n,d}(T) &= \pm\int_{0}^{T} \frac{\dot{\zeta}-\dot{\beta}\cos\gamma}{2}\,\mathrm{d}t
\label{eq:reduced-dynamical-phase-integral-1}\\
&= 
\pm\int_{0}^{T} \frac{-\Delta\cos\gamma +\dot{\zeta}\sin^2\gamma}{2}\,\mathrm{d}t\nonumber\\
\label{eq:reduced-dynamical-phase-integral-2}
&= \mp\frac{\Delta}{2}\int_{0}^{T}\cos\gamma\,\mathrm{d}t \pm \frac{1}{2}\int_{0}^{T}\dot{\zeta}\sin^2\gamma\,\mathrm{d}t.
\end{align}
We begin by considering the case where there is additive detuning and multiplicative pulse amplitude noise. 
Imposing simultaneous robustness against these noise sources would require $F_\Delta(0) = F_\Omega(0) = 0$. However, we see in Eqs.~\eqref{eq:FF-Delta} and \eqref{eq:FF-Omega} that the filter function is strictly nonnegative and the only way to achieve robustness against static noise is if every integral vanishes. Specifically, robustness against static additive detuning noise requires $\int_{0}^{T}\cos \gamma\,\mathrm{d}t = 0$, while robustness against static amplitude noise requires $\int_{0}^{T}\dot{\zeta}\sin^2\gamma\,\mathrm{d}t=0$. Notice, however, that these are precisely the integral expressions in Eq.~\eqref{eq:reduced-dynamical-phase-integral-2}. Thus, simultaneous robustness against static detuning and pulse amplitude error necessarily requires the dynamical phase to vanish, i.e., the gate must be geometric. 

Next, we consider the case where there is multiplicative detuning and pulse amplitude noise. The multiplicative detuning filter function can be found using Eq.~\eqref{eq:Delta-IE} and is given by
\begin{align}
    F_{\Delta,\times}(\omega) &= \norm{\bigintsss_{0}^{T} \frac{\Delta}{2}\begin{bmatrix}
        \cos\gamma\\\sin\gamma\cos\zeta\\\sin\gamma\sin\zeta
    \end{bmatrix} e^{i\omega t}\,\mathrm{d}t}^2\nonumber\\
    &= \norm{\bigintsss_{0}^{T} \frac{\dot{\beta}-\dot{\zeta}\cos\gamma}{2}\begin{bmatrix}
        \cos\gamma\\\sin\gamma\cos\zeta\\\sin\gamma\sin\zeta
    \end{bmatrix} e^{i\omega t}\,\mathrm{d}t}^2.
\end{align}
Robustness to static noise would require $F_{\Delta,\times}(0) = 0$. We focus in particular on the first integral which can be rewritten as
\begin{equation}
    \frac{1}{2}\int_{0}^{T} \dot{\beta}\cos\gamma - \dot{\zeta} + \dot{\zeta}\sin^{2}\gamma \,\mathrm{d}t.
\end{equation}
Just like before, we note that imposing robustness against static pulse amplitude noise requires $\int_{0}^{T}\dot{\zeta}\sin^2\gamma\,\mathrm{d}t=0$ which eliminates the last term in expression above. Setting the remaining terms to zero is equivalent to setting Eq.~\eqref{eq:reduced-dynamical-phase-integral-1} to zero. Therefore, imposing simultaneous robustness against static multiplicative detuning and pulse amplitude noise necessitates a geometric gate.
\end{proof}
We make the following observations. First, this theorem is consistent with other results in the literature. It was previously noted in Refs.~\cite{Ichikawa_2012,Bando_2013} that composite pulse sequences with detuning fixed to zero that are designed to be robust against multiplicative pulse amplitude noise (and are trivially robust against multiplicative detuning noise since $\Delta = 0$) are indeed geometric quantum gates. Second, we note that in that special case of $\Delta = 0$, the first term in Eq.~\eqref{eq:reduced-dynamical-phase-integral-2} vanishes regardless of the value of the integral. In other words, if we don't require robustness to pulse amplitude noise, it \textit{is} possible to obtain dynamical gates that are robust to static detuning noise. A well-known example is the CORPSE family of composite pulses which are designed to be robust against additive detuning noise~\cite{Cummins_2000,Cummins_2001,Cummins_2003}.  Third, gates that are robust against static multiplicative pulse amplitude noise are necessarily geometric but the converse isn't true. One specific example of this is the orange-slice geometric gate presented in Ref.~\cite{Zhao_2017}. It was shown in Ref.~\cite{Colmenar_2021} that the pulse amplitude filter function in this particular case does not vanish at $\omega = 0$ despite being a geometric gate. Fourth, we note that the parallel transport condition $(\ev**{H(t)}{\phi_{\pm}(t)} = 0)$ is not necessary to achieve a robust geometric gate; the dynamical phase integral simply has to vanish at the gate time. Finally, this theorem is consistent with the results of Ref.~\cite{Colmenar_2021}. It is argued there that in the absence of control constraints, geometric and dynamical gates are generically equivalent when it comes to noise sensitivity, and preferential phase robustness can only emerge in the presence of control constraints. In this case, the constraint is considering a strictly constant $\Delta$. Removing the constraint on $\Delta$ turns $\beta$ into a free parameter. According to Eqs.~\eqref{eq:geometric-phase-2} and \eqref{eq:dynamical-phase-2}, the geometric and dynamical component of the total phase are directly  dependent on our choice of $\beta$. Thus, in the absence of constraints, we can freely tune the phase type from dynamical to geometric. Moreover, the filter functions in Eqs.~\eqref{eq:FF-Delta} and \eqref{eq:FF-Omega} are independent of $\beta$. This indicates that noise sensitivity, as quantified by the filter function, is independent of the phase type in the absence of control constraints as was also shown more generally in Ref.~\cite{Colmenar_2021}.

\section{Conclusions}
\label{sec:conclusions}

We make use of dynamical invariant theory in order to analytically reverse engineer a qubit's control Hamiltonian and calculate its corresponding filter function. This allows us to define a cost function strictly in terms of the dynamical invariant parameters which can be optimized to create filter functions with desirable properties. The primary limitation of our theory is its currently limited applicability to two-level systems, with no provision for operations on more than one qubit or correction of population leakage to higher energy levels. (The effects of \emph{virtual} transitions to higher energy levels do not pose a problem, since they can be incorporated in an effective one-qubit Hamiltonian~\cite{Schrieffer_1966}.) In those cases a generalized approach such as Ref.~\cite{Ball_2021} is preferable. However, for the specific task of constructing local rotations with robustness against high frequency noise bands, our method is a useful and efficient tool.

We demonstrate the utility of our theory by generating control pulses that are optimized to operate in the presence of broadband noise. One example we considered is creating a stopband filter for both detuning and pulse amplitude noise. We report at least an order of magnitude improvement in infidelity when our optimized pulse is compared with known composite pulse sequences that are designed to address one or both noise types. Although filter function engineering itself is not a novel concept~\cite{Ball_2021}, our approach is efficient since the reverse engineering process circumvents the need to compute the evolution operator during the optimization process. The optimizer only requires that we calculate a simple integral expression with the engineered parameters as its input. Furthermore, the engineered parameters offer adequate flexibility to simultaneously target arbitrary qubit gates while considering control parameter constraints. In principle, more complicated constraints, such as using different basis functions (Chebyshev, Walsh, Slepian, etc.), time-symmetric or antisymmetric control~\cite{Barnes_2015, Gungordu_2019, Bonesteel_2001}, or spectral-phase-only optimization~\cite{Guo_2018} to name a few, can also be incorporated into our theory. Our results can also be applied to quantum sensing where instead the goal is to maximize the filter function in a limited noise spectral bandwidth~\cite{Frey_2017,Norris_2018}.

More broadly, we used our theoretical framework to analyze the robustness of geometric gates to detuning and pulse amplitude errors. We proved a theorem for the special case of a control constraint under which one-qubit geometric gates are necessarily superior to dynamical gates. We emphasize that the robustness we report is not a generic property of geometric gates but rather a consequence of imposing control constraints.

The authors acknowledge support from the National Science Foundation under Grant No.~1915064.

\appendix

\section{Estimating gate fidelity using filter functions}
\label{app:derivation}
We now provide a more detailed derivation of the average gate infidelity provided in Eq.~\eqref{eq:avg-infidelity} which was reported in Ref. \cite{Green_2013}. We begin by writing the noisy Hamiltonian as
\begin{equation}
H(t) = H_c(t) + H_e(t),
\end{equation}
where $H_c(t)$ is the deterministic control Hamiltonian and $H_e(t)$ is the stochastic error Hamiltonian which can generally expressed as in Eq.~\eqref{eq:error-H}. By moving to the interaction frame, we can write the noisy time evolution as $U(t) = U_c(t) U_e(t)$, where $U_c$ and $U_e$ are solutions to the following Schr\"{o}dinger equations:
\begin{gather}
i \dot{U_c}(t) = H_{c}(t) U_c(t) \\
 i \dot{U_e}(t) = \left(U_c^{\dagger}(t) H_e U_c(t)\right)U_e(t).
\end{gather}
For sufficiently weak noise, we can perturbatively expand $U_e(t)$ using the Magnus expansion and write
\begin{equation}
U_e(t) \approx \exp\left[ -i \int_{0}^{T} U_c^{\dagger}(t) H_e(t) U_c(t) \,\mathrm{d}t \right].
\end{equation}
The average gate infidelity is given by
\begin{align}
\label{eq:avg-infidelity-app}
\expval{\mathcal{I}} &= \expval{1-F_\text{tr}} = \expval{1- \abs{\tr \left( U_c^{\dagger} U \right)/\tr\left( U_c^\dagger U_c \right) }^2} \nonumber\\
 &= \expval{1-\abs{\tr U_e/2}^2} \nonumber\\
 & \approx \bigg\langle \tr \int_{0}^{T} \int_{0}^{T} \left[ U_{c}^{\dagger}(t_1) H_e(t_1) U_{c}(t_1) \right]\nonumber\\
&\quad\quad\times  \left[ U_{c}^{\dagger}(t_2) H_e(t_2) U_{c}(t_2) \,\mathrm{d}t_1\,\mathrm{d}t_2\right] \bigg\rangle.
\end{align}
A sufficient condition for the convergence of the Magnus expansion can be expressed as~\cite{Moan_1999,Green_2013}
\begin{equation}
\label{eq:limitation}
\int_{0}^{T} \left(\xi_{1}^{2}(t) + \xi_{2}^{2}(t) + \xi_{3}^{2}(t) \right)^\frac{1}{2}\,\mathrm{d}t < \pi,
\end{equation}
where $\xi_i(t) = \sum_q \delta_q(t) \chi_{q,i}(t)$ as described in Eq.~\eqref{eq:error-H}. We can use the adjoint representation of $U_c(t)$ defined through
\begin{equation}
R_{ij}(t) = \frac{1}{2}\tr\left(U_{c}^{\dagger}(t)\sigma_{i}U_{c}(t)\sigma_{j}\right)
\end{equation}
and Eq.~\eqref{eq:error-H} to rewrite Eq.~\eqref{eq:avg-infidelity-app} into
\begin{align}
\label{eq:avg-infidelity-app-2}
\expval{\mathcal{I}} \approx & \sum_{q,i,j,k} \int_{0}^{T} \int_{0}^{T} \expval{\delta_q(t_1) \delta_q(t_2)} \chi_{q,i}(t_1) \chi_{q,j}(t_2)\nonumber\\
&\qquad\times R_{ik}(t_1) R_{jk}(t_2)\,\mathrm{d}t_1\,\mathrm{d}t_2.
\end{align}
We can invoke the Wiener-Khinchin theorem for a wide-sense stationary noise process to express the autocorrelation function of $\delta_q(t)$ as the Fourier transform of its PSD: $\expval{\delta_q(t_1),\delta_q(t_2)} = \frac{1}{2\pi}\int_{-\infty}^{\infty}  S_{q}(\omega)\mathrm{e}^{i\omega(t_2-t_1)}\,\mathrm{d}\omega$. If we further define
\begin{equation}
R_{q,k}(\omega) \equiv \sum_i\int_{0}^{T} \chi_{q,i}(t) R_{ik}(t) \mathrm{e}^{i\omega t}\,\mathrm{d}t,
\end{equation}
we can finally compactly write the gate infidelity as
\begin{equation}
\expval{\mathcal{I}} \approx \frac{1}{2\pi} \sum_{q} \int_{-\infty}^{\infty} S_{q}(\omega)F_{q}(\omega)\,\mathrm{d}\omega,
\end{equation}
where $F_{q}(\omega) \equiv \sum_k \left| R_{q,k}(\omega)\right|^2$. We emphasize that this expression assumes that the Magnus expansion converges which means that Eq.~\eqref{eq:limitation} is satisfied. However, this does not guarantee that contributions of higher-order filter functions to the infidelity are negligible. To this end, we can introduce the ``smallness" parameter
\begin{equation}
\xi^2 \equiv \sum_i \langle \xi_i^2(0) \rangle T^2.
\end{equation}
If $\xi^2 \ll 1$, then it can be shown that the higher-order infidelity terms can be safely neglected~\cite{Green_2013}. This consequently restricts the value of the noise PSD amplitude $A$ in Eqs.~\eqref{eq:PSD-1}--\eqref{eq:PSD-2} for which a first-order approximation is sufficient.

\section{Numerical optimization method}
\label{app:method}
We describe here the details of our numerical optimization. We used Julia's DiffEqFlux package to create a feedforward deep neural network with one input neuron, two output neurons, and two hidden layers with 32 neurons each. In principle, one hidden layer is sufficient to approximate any continuous function. However, we noticed an improvement in the optimization's convergence rate and final cost value when we added a second hidden layer. Using even deeper networks did not give any noticeable improvement and only slowed down the optimization.

Our goal is to minimize the cost given in Eq.~\eqref{eq:cost}. The infidelity integral of a noise process $q$ in the first two terms of Eq.~\eqref{eq:cost} can be expressed as
\begin{align}
    \expval{\mathcal{I}_q} &\approx \frac{1}{2\pi}\int_{-\infty}^{\infty} \int_{0}^{T} \int_{0}^{T} \left(\Lambda(t_1)\vec{\chi}_q(t_1)\right)^\intercal \Lambda(t_2) \vec{\chi}_q(t_2) \nonumber\\
    \label{eq:app-infidelity}
    &\qquad\qquad \times S_q(\omega) e^{i \omega \left(t_1 - t_2\right)} \,\mathrm{d}t_1 \,\mathrm{d}t_2 \,\mathrm{d}\omega ,
\end{align}
where $\vec{\chi} = \left[\chi_{q,X},\chi_{q,Y},\chi_{q,Z}\right]^\intercal$ is the error sensitivity vector. In the main text, the noise PSD assumes one of two nontrivial forms: $\frac{A}{\omega}$ and $\frac{A \omega_c}{\omega^2}$. We can evaluate the frequency integrals analytically which are given by
\begin{gather}
    \int_{\omega_0}^{\omega_c} \frac{A}{\omega} e^{i \omega t}\,\mathrm{d}t = 2\left(\text{Ci}\left(\omega_c t\right) - \text{Ci}\left(\omega_o t\right)\right),\\
    \int_{\omega_c}^{\infty} \frac{A \omega_c}{\omega^2} e^{i \omega t}\,\mathrm{d}t = -\pi \omega_c t + 2\cos\left(\omega_c t\right) + 2 \omega_c t\,\text{Si}\left(\omega_c t\right),
\end{gather}
where $\text{Ci}(t)$ and $\text{Si}(t)$ are the cosine and sine integral function, respectively. Let us define $g_q(t_1-t_2) = \frac{1}{2\pi}\int_{-\infty}^{\infty} S_q(\omega) e^{i\omega(t_1-t_2)}\,\mathrm{d}\omega$. This allows us to express Eq.~\eqref{eq:app-infidelity} as
\begin{equation}
\int_{0}^{T}\int_{0}^{T} g_q(t_1 - t_2) \left(\Lambda(t_1)\vec{\chi}_q(t_1)\right)^\intercal \Lambda(t_2) \vec{\chi}_q(t_2)\,\mathrm{d}t_1 \,\mathrm{d}t_2.
\end{equation}
We can approximate the integrals by converting them into a series of matrix multiplications. In particular, we can treat each time integral as an integral operator which has $g_q$ as its kernel and takes in $\bm{v}_q = \Lambda \vec{\chi}_q$ as input. Therefore, the average infidelity can be rewritten in the following bilinear form 
\begin{equation}
\label{eq:bilinear}
    \expval{\mathcal{I}_q} \approx \bm{v}_{q}^{\intercal} \mathbbm{L} \bm{v}_{q},
\end{equation}
where $\mathbbm{L}$ is a matrix that approximates the double time integral.

In our work, the cost is completely vectorized by evaluating the cost terms in evenly spaced intervals of time. The infidelity integrals are evaluated using Eq.~\eqref{eq:bilinear} while derivatives, which are used in evaluating quantities such as $\Omega$ in Eq.~\eqref{eq:Omega-IE}, are implemented using finite differences. Thus, the speed and accuracy of optimization can be controlled by choosing an appropriate level of time discretization. Finally, the relative weights are chosen to guarantee that the constraints are satisfied. The infidelity terms are equally weighted which sets $c_1 = c_2$, while the constraint terms ($c_3 - c_7$) are at least an order of magnitude larger than $c_{1,2}$.

\bibliography{filterFunctionDetuning}
\bibliographystyle{apsrev4-2}

\end{document}